\theoremstyle{plain}
\numberwithin{equation}{section}
\newtheorem{thm}{Theorem}[section]
\newtheorem{lem}[thm]{Lemma}
\newtheorem{cor}[thm]{Corollary}
\newenvironment{exam}
{\begin{flushleft}\textbf{Example}.\enspace}%
{\end{flushleft}}
\newcommand{\complex}{{\mathbb C}}
\newcommand{\real}{{\mathbb R}}
\newcommand{\tbullet}{\raise .4ex\hbox{\tiny$\bullet$}} 
\newcommand{\phihat}{\widehat{\phi}}
\newcommand{\pscripthat}{\widehat{\pscript}}
\newcommand{\rhotilde}{\widetilde{\rho}}
\newcommand{\rmtr}{\mathrm{tr\,}}
\newcommand{\ityes}{\textit{yes}}
\newcommand{\itno}{\textit{no}}
\newcommand{\ascript}{\mathcal{A}}
\newcommand{\cscript}{\mathcal{C}}
\newcommand{\escript}{\mathcal{E}}
\newcommand{\iscript}{\mathcal{I}}
\newcommand{\lscript}{\mathcal{L}}
\newcommand{\mscript}{\mathcal{M}}
\newcommand{\oscript}{\mathcal{O}}
\newcommand{\pscript}{\mathcal{P}}
\newcommand{\sscript}{\mathcal{S}}
\newcommand{\ab}[1]{\left|#1\right|}
\newcommand{\doubleab}[1]{\left|\left|#1\right|\right|}
\newcommand{\brac}[1]{\left\{#1\right\}}
\newcommand{\paren}[1]{\left(#1\right)}
\newcommand{\sqbrac}[1]{\left[#1\right]}
\newcommand{\elbows}[1]{{\left\langle#1\right\rangle}}
\newcommand{\ket}[1]{{\left|#1\right>}}
\newcommand{\bra}[1]{{\left<#1\right|}}
\begin{document}

\title{QUANTUM INSTRUMENTS\\ AND CONDITIONED OBSERVABLES}
\author{Stan Gudder\\ Department of Mathematics\\
University of Denver\\ Denver, Colorado 80208\\
sgudder@du.edu}
\date{}
\maketitle

\begin{abstract}
Observables and instruments have played significant roles in recent studies on the foundations of quantum mechanics. Sequential products of effects and conditioned observables have also been introduced. After an introduction in Section~1, we review these concepts in Section~2. Moreover, it is shown how these ideas can be unified within the framework of measurement models. In Section~3, we illustrate these concepts and their relationships for the simple example of a qubit Hilbert space. Conditioned observables and their distributions are studied in Section~4. Section~5 considers joint probabilities of observables. We introduce a definition for joint probabilities and discuss why we consider this to be superior to the standard definition.
\end{abstract}

\section{Introduction}  
This article is a continuation of the author's work on conditioned observables in quantum mechanics \cite{gud20}. For the reader's convenience, we first review the concepts needed in the present paper. We shall only consider quantum systems described by finite-dimensional Hilbert spaces. Although this is a strong restriction, it is general enough to include the important subjects of quantum computation and information theory \cite{hz12,nc00}.

In Section~2, we review the definitions of quantum effects, observables and instruments \cite{bgl95,hz12,kra83,nc00}. We consider the sequential product and conditioning of effects and observables \cite{gg04,gg02,gn01,gud20}. Quantum operations, channels and instruments are discussed. The idea of different instruments measuring an observable is presented and the special role of the L\"uders instrument is emphasized. We also discuss the unifying framework of measurement models \cite{bcl95,bgl95,hz12}.

The various concepts presented in Section~2 are illustrated for the simplest case of a qubit Hilbert space in Section~3. In particular, we discuss spin component observables. Section~4 studies conditioned observables. Complementary observables and their relationship to mutually unbiased bases are presented. We also consider observable probability distributions. Finally, in Section~5 we introduce what we consider to be the natural and correct definition of joint probabilities of observables. Moreover, we discuss why we believe this to be superior to the standard definition.

\section{Effects, Observables and Instruments}  
Let $\lscript (H)$ be the set of linear operators on a finite-dimensional complex Hilbert space $H$. For $S,T\in\lscript (H)$ we write $S\le T$ if $\elbows{\phi ,S\phi}=\elbows{\phi ,T\phi}$ for all $\phi\in H$. We define the set of \textit{effects} by
\begin{equation*}
\escript (H)=\brac{a\in\lscript (H)\colon 0\le a\le I}
\end{equation*}
where $0$, $I$ are the zero and identity operators, respectively. The effects correspond to \ityes -\itno\ experiments and
$a\in\escript (H)$ is said to \textit{occur} when a measurement of $a$ results in the value \ityes . We denote the set of projections on $H$ by $\pscript (H)$. It is clear that $\pscript (H)\subseteq\escript (H)$ and we call the elements of
$\pscript (H)$ \textit{sharp effects} \cite{hz12,kra83,nc00}. A one-dimensional projection $P_\phi =\ket{\phi}\bra{\phi}$, where
$\doubleab{\phi}=1$, is called an \textit{atom}. If $\phi\in H$ with $\phi\ne 0$, we write $\phihat =\phi\big/\doubleab{\phi}$. We then have
\begin{equation*}
P_{\phihat}=\tfrac{1}{\doubleab{\phi}^2}\,\ket{\phi}\bra{\phi}
\end{equation*}

We call $\rho\in\escript (H)$ a \textit{partial state} if $\rmtr (\rho )\le 1$ and $\rho$ is a \textit{state} if $\rmtr (\rho )=1$. We denote the set of states by $\sscript (H)$ and the set of partial states by $\sscript _p(H)$. If $\rho\in\sscript (H)$,
$a\in\escript (H)$ we call $\pscript _\rho (a)=\rmtr(\rho a)$ the \textit{probability that} $a$ \textit{occurs} in the state $\rho$. Of course, $0\le\pscript _\rho (a)\le 1$. If $a,b\in\escript (H)$ and $a+b\le I$ we write $a\perp b$. When $a\perp b$ we have that $a+b\in\escript (H)$ and $\pscript _\rho (a+b)=\pscript _\rho (a)+\pscript _\rho (b)$. If $P_\phi$ is an atom, then we call
$P_\phi$ (and $\phi$) a \textit{pure state}. We then write
\begin{equation*}
\pscript _\phi(a)=\pscript _{P_\phi}(a)=\rmtr (P_\phi a)=\elbows{\phi ,a\phi}
\end{equation*}
If $\phi$ and $\psi$ are pure states, we call $\ab{\elbows{\phi ,\psi}}^2$ the \textit{transition probability} from $\phi$ to $\psi$.

We denote the unique positive square-root of $a\in\escript (H)$ by $a^{1/2}$. For $a,b\in\escript (H)$, their
\textit{sequential product} is the effect $a\circ b=a^{1/2}ba^{1/2}$ where $a^{1/2}ba^{1/2}$ is the usual operator product \cite{gg04,gg02,gn01,lud51}. We interpret $a\circ b$ as the effect that results from first measuring $a$ and then measuring
$b$. It can be shown that $a\circ b\le a$ and that $a\circ b=b\circ a$ if and only if $ab=ba$. If $ab=ba$, we say that $a$ and
$b$ are \textit{compatible} and interpret this physically as meaning that $a$ and $b$ do not interfere. We also call $a\circ b$ the effect $b$ \textit{conditioned on} the effect $a$ and write $(b\mid a)=a\circ b$. Notice that if $b_1,b_2\in\escript (H)$ with $b_1\perp b_2$, then $(b_1+b_2\mid a)=(b_1\mid a)+(b_2\mid a)$. Moreover, $\escript (H)$ is convex and if 
$\lambda _i\ge 0$ with $\sum\lambda _i=1$, then 
\begin{equation*}
\paren{\sum\lambda _ib_i\mid a}=\sum\lambda _i(b_i\mid a)
\end{equation*}
so $b\mapsto (b\mid a)$ is an affine function. Of course, $a\mapsto (b\mid a)$ is not an affine function in general.

If $\rho\in\sscript (H)$ and $a\in\escript (H)$ with $\rho\circ a\ne 0$, since $\rho\circ a\le \rho$ we have that
\begin{equation*}
\rmtr\sqbrac{(\rho\mid a)}=\rmtr (a\circ\rho )=\rmtr (\rho\circ a)\le\rmtr (\rho )=1
\end{equation*}
Hence, $(\rho\mid a)\in\sscript _p(H)$ and for $a\in\escript (H)$ we obtain
\begin{align*}
\pscript _\rho\sqbrac{(b\mid a)}&=\rmtr\sqbrac{\rho (b\mid a)}=\rmtr(\rho a\circ b)=\rmtr\sqbrac{(a\circ\rho )b}\\
  &=\rmtr\sqbrac{(\rho\mid a)b}
\end{align*}
If $\pscript _\rho (a)=\rmtr (\rho a)\ne 0$, we can form the state $(\rho\mid a)/\rmtr (\rho a)$. Then as a function of $b$
\begin{equation}                
\label{eq21}
\pscripthat _\rho\sqbrac{(b\mid a)}=\frac{\pscript _\rho\sqbrac{(b\mid a)}}{\pscript _\rho (a)}
\end{equation}
becomes a probability measure on $\escript (H)$ and we call $\pscripthat _\rho\sqbrac{(b\mid a)}$ the
\textit{conditional probability of} $b$ \textit{given} $a$.

For a finite set $\Omega _A$, an \textit{observable with value-space} $\Omega _A$ is a subset
$A=\brac{A_x\colon x\in\Omega _A}$ of $\escript (H)$ such that $\sum\limits _{x\in\Omega _A}A_x=I$. We interpret $A_x$ as the effect that occurs when $A$ has the value $x$. The condition $\sum A_x=I$ ensures that $A$ has one of the values
$x\in\Omega _A$ when $A$ is measured. Defining $A_X=\sum\limits _{x\in X}A_x$ for all $X\subseteq\Omega _A$, we see that $X\mapsto A_X$ is a finite positive operator-valued measure on $H$ \cite{bcl95, bgl95,hz12,kra83,nc00}. If
$A_x\in\pscript (H)$ for all $x\in\Omega _A$, we call $A$ a \textit{sharp observable}. The effects $A_x$ for a sharp observable commute and are mutually orthogonal \cite{hz12,nc00}. If the effects $A_x$ are atoms, we say that $A$ is \textit{atomic}. In this case, $A_x=P_{\phi _x}$ where $\brac{\phi _x\colon x\in\Omega _A}$ is an orthonormal basis for $H$. We denote the set of observables on $H$ by $\oscript (H)$.

For $A,B\in\oscript (H)$ with $A=\brac{A_x\colon x\in\Omega _A}$ and $B=\brac{B_y\colon y\in\Omega _B}$ we define their \textit{sequential product} $A\circ B$ \cite{bcl95} to be the observables with value-space $\Omega _A\times\Omega _B$ and
\begin{equation*}
A\circ B=\brac{A_x\circ B_y\colon (x,y)\in\Omega _A\times\Omega _B}
\end{equation*}
The observable $B$ \textit{conditioned by} the observable $A$ \cite{gud20} has value-space $\Omega _B$ and is defined by
\begin{equation*}
(B\mid A)=\brac{\sum _{x\in\Omega _A}A_x\circ B_y\colon y\in\Omega _B}
  =\brac{\sum _{x\in\Omega _A}(B_y\mid A_x)\colon y\in\Omega _B}
\end{equation*}
We denote the effects in $A\circ B$ and $(B\mid A)$ by $(A\circ B)_{(x,y)}=A_x\circ B_y$ and
$(B\mid A)_y=\sum\limits _{x\in\Omega _A}A_x\circ B_y$, respectively. We say that $A$ and $B$ \textit{commute} if $a_xb_y=b_ya_x$ for all $x\in\Omega _A$, $y\in\Omega _B$. If $A$ and $B$ commute, then $(B\mid A)=B$. We do not know whether the converse holds. However, we have the following result.

\begin{lem}    
\label{lem21}
If $(B\mid A)=B$ and $A$ is sharp, then $A$ and $B$ commute.
\end{lem}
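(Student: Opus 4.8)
The plan is to convert the hypothesis $(B\mid A)=B$ into a fixed-point equation for each effect $B_y$ and then use sharpness of $A$ to read off commutativity directly. Since $A$ is sharp, the effects $\brac{A_x\colon x\in\Omega_A}$ are mutually orthogonal projections with $\sum_x A_x=I$, so $A_x^{1/2}=A_x$ and $A_xA_u=\delta_{xu}A_x$. Consequently the sequential product collapses to $A_x\circ B_y=A_xB_yA_x$, and the equality $(B\mid A)_y=B_y$ for each $y$ reads
\begin{equation*}
B_y=\sum_{u\in\Omega_A}A_uB_yA_u\qquad\text{for every }y\in\Omega_B .
\end{equation*}

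The main computation I would carry out is to substitute this expression for $B_y$ back into the products $A_xB_y$ and $B_yA_x$. Fixing $x\in\Omega_A$ and multiplying the right-hand side on the left by $A_x$, the relation $A_xA_u=\delta_{xu}A_x$ kills every term except $u=x$, giving $A_xB_y=A_xB_yA_x$. Multiplying instead on the right by $A_x$ and using $A_uA_x=\delta_{ux}A_x$ gives $B_yA_x=A_xB_yA_x$. Comparing the two identities yields $A_xB_y=B_yA_x$ for all $x\in\Omega_A$, $y\in\Omega_B$, which is precisely the assertion that $A$ and $B$ commute.

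Conceptually, the map $T\mapsto\sum_xA_xTA_x$ is the pinching onto the block-diagonal part relative to the orthogonal decomposition $H=\bigoplus_x A_xH$; the hypothesis says each $B_y$ is already block-diagonal, i.e.\ its off-diagonal components $A_xB_yA_{x'}$ with $x\ne x'$ vanish, and block-diagonality is equivalent to commuting with every $A_x$. I do not expect a genuine obstacle here: sharpness enters only to supply $A_x^{1/2}=A_x$ together with orthogonality, after which the argument is the one-line algebra above. The one point worth flagging is exactly why sharpness is essential — for a general unsharp $A$ the conditioning map $b\mapsto\sum_x A_x^{1/2}bA_x^{1/2}$ need not be a pinching onto a commuting block structure, so the equivalence between being fixed by the conditioning and commuting fails, which is consistent with the authors' remark that the converse without sharpness remains open.
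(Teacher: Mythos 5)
Your proof is correct and follows essentially the same route as the paper's: both derive $B_y=\sum_u A_uB_yA_u$ from the hypothesis and then use mutual orthogonality of the sharp effects $A_x$ to conclude $A_xB_y=A_xB_yA_x=B_yA_x$. Your version just spells out the left- and right-multiplication steps that the paper leaves implicit.
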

\begin{proof}
Since $(B\mid A)=B$ and $A$ is sharp, we have that
\begin{equation*}
\sum _{x\in\Omega _A}A_xB_yA_x=\sum _{x\in\Omega _A}A_x\circ B_y=B_y
\end{equation*}
for every $y\in\Omega _B$. Since the $A_x$'s are mutually orthogonal we obtain
\begin{equation*}
A_xB_yA_x=A_xB_y=B_yA_x
\end{equation*}
For all $x\in\Omega _A$, $y\in\Omega _B$. Hence, $A$ and $B$ commute.
\end{proof}
If $\rho\in\sscript (H)$ and $A\in\oscript (H)$, we define the state $\rho$ \textit{conditioned on} $A$ \cite{gud20} by
\begin{equation}                
\label{eq22}
(\rho\mid A)=\sum _{x\in\Omega _A}(\rho\mid A_x)=\sum _{x\in\Omega _A}A_x^{1/2}\rho A_x^{1/2}
\end{equation}

An \textit{operation} on $H$ is a completely positive affine map $\ascript\colon\sscript _p(H)\to\sscript _p(H)$ \cite{hz12,nc00}. Thus, if $\lambda _i\ge 0$, $\sum\lambda _i=1$ and $\rho _i\in\sscript _p(H)$, $i=1,2,\ldots ,n$, then
\begin{equation*}
\ascript\paren{\sum _{i=1}^n\lambda _i\rho _i}=\sum _{i=1}^n\lambda _i\ascript (\rho _i)
\end{equation*}
We call an operation $\ascript$ a \textit{channel} if $\ascript (\rho )\in\sscript (H)$ for all $\rho\in\sscript (H)$. We denote the set of channels on $H$ by $\cscript (H)$. Notice that if $a\in\escript (H)$, then the map $\rho\mapsto (\rho\mid a)$ is an example of an operation and if $A\in\oscript (H)$ then $\rho\mapsto (\rho\mid A)$ is a channel. For a finite set $\Omega _\iscript$, an \textit{instrument with value-space} $\Omega _\iscript$ is a set of operations
$\iscript =\brac{\iscript _x\colon x\in\Omega _\iscript}$ such that $\sum _{x\in\Omega _\iscript}\iscript _x\in\cscript (H)$. Defining $\iscript _X$ for $X\subseteq\Omega _\iscript$ by $\iscript _X=\sum\limits _{x\in X}\iscript _x$ we see that
$X\mapsto\iscript _X$ is an operation-valued measure on $H$ \cite{bgl95,hz12,kra83}. If $A\in\oscript (H)$, we say that an instrument $\iscript$ is $A$-\textit{compatible} if $\Omega _\iscript =\Omega _A$ and
\begin{equation*}
\pscript _\rho (A_X)=\rmtr\sqbrac{\iscript _X(\rho )}
\end{equation*}
for all $\rho\in\sscript (H)$, $X\subseteq\Omega _A$. To show that $\iscript$ is $A$-compatible, it is sufficient to show that
\begin{equation*}
\pscript _\rho (A_x)=\rmtr\sqbrac{\iscript _x(\rho )}
\end{equation*}
for all $\rho\in\sscript (H)$ and $x\in\Omega _A$.

We view an $A$-compatible instrument as an apparatus that can be employed to measure the observable $A$. If $\iscript$ is an instrument, then there is a unique $A^\iscript\in\oscript (H)$ such that $\iscript$ is $A^\iscript$-compatible \cite{hz12}. It is clear that $\iscript$ is $A$-compatible if and only if $A^\iscript =A$. On the other hand, if $A\in\oscript (H)$, then there are many $A$-compatible instruments. For example, if $\eta\in\sscript (H)$ then the \textit{trivial instrument}
$\iscript _X(\rho )=\rmtr (\rho A_x)\eta$ is $A$-compatible. In this work, an important $A$-compatible instrument is given by the \textit{L\"uders instrument} \cite{hz12} $\lscript _x^A(\rho )=A_x\circ\rho$. We then have
\begin{equation*}
\lscript _X^A(\rho )=\sum _{x\in X}A_x\circ\rho =\sum _{x\in X}A_x^{1/2}\rho A_x^{1/2}
\end{equation*}
Notice that $\lscript _{\Omega _A}^A(\rho )=(\rho\mid A)$ as in \eqref{eq22}. If
$A=\brac{\ket{\phi _x}\bra{\phi _x}\colon x\in\Omega _A}$ is atomic, we obtain
\begin{equation*}
\lscript _X^A(\rho )=\sum _{x\in X}\elbows{\phi _x,\rho\phi _x}\ket{\phi _x}\bra{\phi _x}
  =\sum _{x\in X}\pscript (P_{\phi _x})P_{\phi _x}
\end{equation*}

The duality between observables and instruments is emphasized by the unifying studies of measurement models \cite{bcl95,bgl95,hz12,oza85}. A \textit{measurement model} is a 5-tuple $\mscript =(H,K,\eta ,\nu ,F)$ where $H,K$ are Hilbert spaces called the \textit{base} and \textit{probe} systems, respectively, $\eta\in\sscript (K)$ is an \textit{initial state},
$\nu\colon\sscript (H\otimes K)\to\sscript (H\otimes K)$ is a channel describing the measurement interaction between the base and probe systems and $F\in\oscript (K)$ is the \textit{pointer observable}. The instrument on $H$ defined by
\begin{equation}                
\label{eq23}
\iscript _X^\mscript (\rho )=\rmtr _K\sqbrac{\nu (\rho\otimes\eta )(I\otimes F_X)} 
\end{equation}
is called the \textit{model instrument} where $X\subseteq\Omega _F=\Omega _\iscript$ and $\rmtr _K$ is the
\textit{partial trace} \cite{hz12,nc00}. The unique observable $B^\mscript\in\oscript (H)$ defined by
$B^\mscript =A^{\iscript ^\mscript}$ is the \textit{model observable}. We then have the \textit{probability reproducing condition}
\begin{equation}                
\label{eq24}
\rmtr (\rho B_X^\mscript )=\rmtr\sqbrac{\iscript _X^\mscript (\rho )}=\rmtr\sqbrac{\nu (\rho\otimes\eta )(I\otimes F_X)}
\end{equation}
for all $\rho\in\sscript (H)$, $X\subseteq\Omega _{B^\mscript}=\Omega _F$.

Thus, any measurement model $\mscript$ determines a unique instrument $\iscript ^\mscript$ and a unique observable
$B^\mscript$. Conversely, for any instrument $\iscript$ there exist many measurement models $\mscript$ such that
$\iscript =\iscript ^\mscript$ and for any observable $B$ there exist many model measurements $\mscript$ such that
$B=B^\mscript$. These are shown in the next two results.

\begin{thm}    
\label{thm22}
{\rm (Ozawa \cite{hz12,nc00})} For any instrument $\iscript$ on $H$ there exists a measurement model
$\mscript =(H,K,\eta ,\nu ,F)$ where $\eta$ is a pure state, $\nu$ is a unitary channel $\nu (\mu )=U\mu U^*$ and $F$ is a sharp observable such that $\iscript =\iscript ^\mscript$.
\end{thm}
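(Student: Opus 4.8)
The plan is to realize Ozawa's dilation through the Kraus (operator-sum) representation of the instrument, and then to convert the resulting isometry into a unitary on a suitably enlarged probe space. First I would invoke complete positivity of each operation $\iscript_x$ to write $\iscript_x(\rho)=\sum_i K_{x,i}\rho K_{x,i}^*$ for finitely many Kraus operators $K_{x,i}\in\lscript(H)$. Since $\sum_x\iscript_x$ is a channel, and hence trace-preserving on $\sscript(H)$, these operators satisfy the normalization $\sum_{x,i}K_{x,i}^*K_{x,i}=I$. After padding with zero operators I may assume the index $i$ ranges over one common finite set for every $x\in\Omega_\iscript$.

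Next I would let the probe space $K$ carry an orthonormal basis $\brac{\ket{x,i}\colon x\in\Omega_\iscript,\,i}$, fix the pure initial state $\eta=\ket{e_0}\bra{e_0}$ with $e_0$ one of these basis vectors, and define the pointer observable $F=\brac{F_x\colon x\in\Omega_\iscript}$ by $F_x=\sum_i\ket{x,i}\bra{x,i}$. The $F_x$ are mutually orthogonal projections summing to $I$, so $F$ is a sharp observable with $\Omega_F=\Omega_\iscript$. The heart of the construction is the map $V\colon H\to H\otimes K$ given by $V\phi=\sum_{x,i}(K_{x,i}\phi)\otimes\ket{x,i}$; the normalization condition yields $V^*V=I$, so $V$ is an isometry, and I intend $V\phi=U(\phi\otimes e_0)$ to determine the desired unitary.

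The main obstacle is producing the unitary channel $\nu(\mu)=U\mu U^*$ itself, since $V$ pins down $U$ only on the subspace $H\otimes e_0$. I would extend $V$ to a unitary $U$ on all of $H\otimes K$ using the standard fact that an isometry between two subspaces extends to a unitary exactly when their orthogonal complements have equal dimension; here both complements have dimension $\dim(H\otimes K)-\dim H$, so the extension exists (one may first enlarge $K$ for extra room, absorbing any surplus basis vectors into a single $F_x$ so that $F$ remains a sharp observable with value-space $\Omega_\iscript$).

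Finally I would verify $\iscript=\iscript^\mscript$ by direct computation. Taking $\rho=\ket{\phi}\bra{\phi}$ pure gives $\nu(\rho\otimes\eta)=\ket{V\phi}\bra{V\phi}$, and after multiplying by $I\otimes F_X$ the orthogonality of the basis vectors $\ket{x,i}$ collapses all cross terms; the partial trace $\rmtr_K$ then leaves precisely $\sum_{x\in X}\sum_i K_{x,i}\ket{\phi}\bra{\phi}K_{x,i}^*=\iscript_X(\rho)$. Because both $\iscript_X^\mscript$ and $\iscript_X$ are affine operations and every state is a convex combination of pure states, agreement on pure states promotes to agreement on all of $\sscript(H)$, so $\iscript_X^\mscript=\iscript_X$ for every $X\subseteq\Omega_\iscript$, which is the required identity $\iscript=\iscript^\mscript$.
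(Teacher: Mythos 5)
The paper gives no proof of this theorem: it is quoted as Ozawa's dilation result and attributed to the cited references, so there is nothing internal to compare against. Your argument is the standard proof of that result and is correct as written: the Kraus decomposition of each operation $\iscript _x$ with the normalization $\sum _{x,i}K_{x,i}^*K_{x,i}=I$, the isometry $V\phi =\sum _{x,i}(K_{x,i}\phi )\otimes\ket{x,i}$, the extension of $\phi\otimes e_0\mapsto V\phi$ to a unitary $U$ by matching the (equal, finite) dimensions of the orthogonal complements, and the partial-trace computation $\rmtr _K\sqbrac{\ket{V\phi}\bra{V\phi}(I\otimes F_X)}=\sum _{x\in X}\sum _iK_{x,i}\ket{\phi}\bra{\phi}K_{x,i}^*$ followed by affine extension from pure states all go through in the finite-dimensional setting the paper assumes.
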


\begin{cor}    
\label{cor23}
For any $B\in\oscript (H)$ there exists a measurement model $\mscript$ as in Theorem~\ref{thm22} such that
$B=B^\mscript$.
\end{cor}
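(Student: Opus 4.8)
The plan is to reduce the corollary to Theorem~\ref{thm22} by first exhibiting a $B$-compatible instrument and then invoking Ozawa's result. First I would recall that, as noted in the discussion preceding this statement, every observable $B\in\oscript (H)$ carries its L\"uders instrument $\lscript ^B$, given by $\lscript _x^B(\rho )=B_x\circ\rho$, and that this instrument is $B$-compatible. By the uniqueness of the observable attached to an instrument (the existence and uniqueness of $A^\iscript$ with $\iscript$ being $A^\iscript$-compatible), together with the fact that $\iscript$ is $A$-compatible if and only if $A^\iscript =A$, this $B$-compatibility says precisely that $A^{\lscript ^B}=B$.

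Next I would apply Theorem~\ref{thm22} to the particular instrument $\iscript =\lscript ^B$. This produces a measurement model $\mscript =(H,K,\eta ,\nu ,F)$ of exactly the required form---with $\eta$ a pure state, $\nu$ a unitary channel $\nu (\mu )=U\mu U^*$, and $F$ a sharp observable---satisfying $\iscript ^\mscript =\lscript ^B$.

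Finally I would unwind the definition of the model observable. By definition $B^\mscript =A^{\iscript ^\mscript}$, and since $\iscript ^\mscript =\lscript ^B$ we obtain $B^\mscript =A^{\lscript ^B}=B$, which is the desired conclusion.

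There is essentially no obstacle here: the mathematical content lies entirely in Theorem~\ref{thm22}, and the corollary amounts to choosing a convenient instrument compatible with $B$ and then reading off $B^\mscript$. The only point that requires care is the bookkeeping among the three assignments $B\mapsto\lscript ^B$, $\iscript\mapsto A^\iscript$, and $\mscript\mapsto B^\mscript$; the key identity $A^{\lscript ^B}=B$ follows from the $B$-compatibility of the L\"uders instrument together with the uniqueness of the compatible observable. Any other $B$-compatible instrument would serve equally well, so the L\"uders choice is simply the most transparent one.
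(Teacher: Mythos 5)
Your proposal is correct and follows essentially the same route as the paper: the paper's proof likewise takes some $B$-compatible instrument (it merely asserts existence, whereas you concretely name the L\"uders instrument $\lscript ^B$), applies Theorem~\ref{thm22} to obtain $\mscript$ with $\iscript =\iscript ^\mscript$, and concludes $B=A^\iscript =A^{\iscript ^\mscript}=B^\mscript$. The explicit choice of $\lscript ^B$ is a harmless specialization of the same argument.
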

\begin{proof}
Given $B\in\oscript (H)$ there exists a $B$-compatible instrument $\iscript$. By Theorem~\ref{thm22}, there exists a measurement model satisfying the given conditions such that $\iscript =\iscript ^\mscript$. Then
$B=A^\iscript =A^{\iscript ^\mscript}$ so $B=B^\mscript$.
\end{proof}

We now continue this study to include sequential products of observables and conditioned observables. If $A,B\in\oscript (H)$, then $A\circ B\in\oscript (H)$. By Corollary~\ref{cor23}, there exists a measurement model $\mscript =(H,K,\eta ,\nu ,F)$ satisfying the conditions of Theorem~\ref{thm22} such that $A\circ B=B^\mscript$. We then have that
\begin{equation*}
\Omega _F=\Omega _{A\circ B}=\Omega _A\times\Omega _B
\end{equation*}
and by \eqref{eq24}
\begin{equation}                
\label{eq25}
\rmtr\sqbrac{\rho (A\circ B)_Z}=\rmtr\sqbrac{U(\rho\otimes P_\phi )U^*(I\otimes F_Z)}
\end{equation}
for every $\rho\in\sscript (H)$ and $Z\subseteq\Omega _A\times\Omega _B$. In particular, for every $(x,y)\in\Omega _A\otimes\Omega _B$, $\rho\in\sscript (H)$ we obtain
\begin{equation}                
\label{eq26}
\rmtr (\rho A_x\circ B_y)=\rmtr\sqbrac{\rho (A\circ B)_{(x,y)}}=\rmtr\sqbrac{U(\rho\otimes P_\phi )U^*(I\otimes F_{(x,y)})}
\end{equation}
The advantage of \eqref{eq25} and \eqref{eq26} is that the statistics of $A\circ B$, which may be unsharp, is described by the sharp observable $F$ and as we have seen, sharp observables are simpler than general unsharp ones. In particular, the effects $F_{(x,y)}$ commute and are mutually orthogonal. Applying \eqref{eq26}, we conclude that
\begin{equation*}
\rmtr (\rho A_x)=\rmtr (\rho A_x\circ B_{\Omega _B})
  =\rmtr\sqbrac{U(\rho\otimes P_\phi )U^*\paren{I\otimes\sum _{y\in\Omega _B}F_{(x,y)}}}
\end{equation*}
so $A$ is described by the sharp observable
\begin{equation}                
\label{eq27}
\brac{\sum _{y\in\Omega _B}F_{(x,y)}\colon x\in\Omega _A}
\end{equation}
Considering $(B\mid A)$ we have by \eqref{eq26} that
\begin{align*}
\rmtr\sqbrac{\rho (B\mid A)_y}&=\rmtr\paren{\rho\sum _{x\in\Omega _A}A_x\circ B_y}
   =\sum _{x\in\Omega _A}\rmtr (\rho A_x\circ B_y)\\\noalign{\medskip}
   &=\rmtr\sqbrac{U(\rho\otimes P_\phi )U^*\paren{I\otimes\sum _{x\in\Omega _A}F_{(x,y)}}}
\end{align*}
so $(B\mid a)$ is described by the sharp observable
\begin{equation}                
\label{eq28}
\brac{\sum _{x\in\Omega _A}F_{(x,y)}\colon y\in\Omega _B}
\end{equation}
where \eqref{eq27} and \eqref{eq28} commute even though $A$ and $(B\mid A)$ need not. But this is taking us away from our primary mission so we leave a further study to later work.

\section{Qubit Hilbert Space}  
This section illustrates the concepts presented in Section~2 for the simplest case of a qubit Hilbert space $H=\complex ^2$ with the usual inner product. Let $\phi =(1,0)$, $\phi '=(0,1)$ be the standard orthonormal basis for $\complex ^2$ Relative to this bases, the \textit{Pauli operators} have the matrix forms
\begin{equation*}
\sigma _x=\begin{bmatrix}0&1\\1&0\\\end{bmatrix},\quad\sigma _y=\begin{bmatrix}0&-i\\i&0\\\end{bmatrix},\quad
  \sigma _z=\begin{bmatrix}1&0\\0&-1\\\end{bmatrix}
\end{equation*}
Letting $\sigma =(\sigma _x,\sigma _y,\sigma _z)$, every $\rho \in\sscript (H)$ has the form
\begin{equation*}
\rho =\tfrac{1}{2}\,(I+r\tbullet\sigma )
\end{equation*}
where $r\in\real ^3$ with $\doubleab{r}\le 1$ and $\tbullet$ is the usual dot product in $\real ^3$ \cite{hz12}. The eigenvalues of $\rho$ are $\lambda _\pm =\tfrac{1}{2}\,\paren{1\pm\doubleab{r}}$. We have that $\lambda _+=1$ and $\lambda _-=0$ if and only if $\doubleab{r}=1$ so $\rho$ is pure if and only if $\doubleab{r}=1$. Every $a\in\escript (H)$ has the form
\begin{equation*}
a=\tfrac{1}{2}\,(\alpha I+n\tbullet\sigma )
\end{equation*}
where $\doubleab{n}\le\alpha\le 2-\doubleab{n}$ and positivity is equivalent to $\doubleab{n}\le\alpha$ \cite{hz12}.

For $n\in\real ^3$ with $\doubleab{n}=1$, define the atoms $S_\pm ^n=\tfrac{1}{2}\,(I\pm n\tbullet\sigma )$. The atomic observable $S^n=\brac{S_+^n,S_-^n}$ is called the \textit{spin component observable in direction} $n$ \cite{hz12}. Then
$S_+^n$ is the effect for which the spin component is $+$ and $S_-^n$ is the effect for which the spin component is $-$ in the direction $n$ and the value-space $\Omega _{S^n}=\brac{+,-}$. The effect $S_+^n$ is the $1$-dimensional projection
\begin{equation*}
S_+^n=\tfrac{1}{2}\,\begin{bmatrix}1+n_3&n_1-in_2\\n_1+in_2&1-n_3\\\end{bmatrix}
\end{equation*}
and $S_-^n=I-S_+^n$. Suppose we measure $S^m$ first and $S^n$ second. Then
\begin{align}                
\label{eq31}
S^m\circ S^n&=\brac{S_+^m\circ S_+^n,S_+^m\circ S_-^n,S_-^m\circ S_+^n,S_-^m\circ S_-^n}\\
  \intertext{and}          
\label{eq32}       
(S^n\mid S^m)&=\brac{S_+^m\circ S_+^n+S_-^m\circ S_+^n,S_+^m\circ S_+^n+S_-^m\circ S_-^n}
\end{align}
The observables in \eqref{eq31} and \eqref{eq32} are not sharp even though $S^m$ and $S^n$ are sharp.

To illustrate, let $m=(0,0,1)$ and $n=(1,0,0)$. These correspond to spin measurements in the $z$ and $x$ directions, respectively. Then
\begin{equation*}
S_+^m=\begin{bmatrix}1&0\\0&0\\\end{bmatrix},\quad S_-^m=\begin{bmatrix}0&0\\0&1\\\end{bmatrix},\quad
  S_+^n=\tfrac{1}{2}\begin{bmatrix}1&1\\1&1\\\end{bmatrix},\quad
  S_-^n=\tfrac{1}{2}\begin{bmatrix}1&-1\\-1&1\\\end{bmatrix},
\end{equation*}
and we have that
\begin{align*}
S_+^m\circ S_+^n&=S_+^m\circ S_-^n=\tfrac{1}{2}\,S_+^m\\
S_-^m\circ S_+^n&=S_-^m\circ S_-^n=\tfrac{1}{2}\,S_-^m
\end{align*}
Hence
\begin{align*}
S^m\circ S^n&=\brac{\tfrac{1}{2}\,S_+^m,\tfrac{1}{2}\,S_+^m,\tfrac{1}{2}\,S_-^m,\tfrac{1}{2}\,S_-^m}\\
(S^n\mid S^m)&=\brac{(S^n\mid S^m)_+,(S^n\mid S^m)_-}=\brac{\tfrac{1}{2}\,I,\tfrac{1}{2}\,I}
\end{align*}
and similar formulas hold for $S^n\circ S^m$ and $(S^m\mid S^n)$. Notice that $(S^n\mid S^m)=(S^m\mid S^n)$ but $S^m$ and $S^n$ do not commute.

We now find the general form of $S^m\circ S^n$ and $(S^n\mid S^m)$. The normalized eigenvector of $S_+^n$ with corresponding eigenvalue $1$ is
\begin{equation*}
\phi _+^n=\frac{1}{\sqrt{2(1-n_3)\,}}\begin{bmatrix}n_1-in_2\\1-n_3\\\end{bmatrix}\hbox{if }n_3\ne 1,\quad
  \phi _+^n=\begin{bmatrix}1\\0\\\end{bmatrix}\hbox{if }n_3=1
\end{equation*}
and corresponding to eigenvalue $0$ we have
\begin{equation*}
\phi _-^n=\frac{1}{\sqrt{2(1+n_3)\,}}\begin{bmatrix}in_2-n_1\\1+n_3\\\end{bmatrix}\hbox{if }n_3\ne -1,\quad
  \phi _-^n=\begin{bmatrix}0\\1\\\end{bmatrix}\hbox{if }n_3=-1
\end{equation*}
Since $S_+^n=\ket{\phi _+^n}\bra{\phi _+^n}$ and $S_-^n=\ket{\phi _-^n}\bra{\phi _-^n}$, we obtain
\begin{align*}
S^m&\circ S^n\\
  &=\brac{\ab{\elbows{\phi _+^m,\phi _+^n}}^2S_+^m,\ab{\elbows{\phi _+^m,\phi _-^n}}^2S_+^m,
   \ab{\elbows{\phi _-^m,\phi _+^n}}^2S_-^m,\ab{\elbows{\phi _-^m,\phi _-^n}}^2S_-^m}
\end{align*}
Moreover, we have that
\begin{align*}
(S^n\mid S^m)_+&=\ab{\elbows{\phi _+^m,\phi _+^n}}^2S_+^m+\ab{\elbows{\phi _-^m,\phi _+^n}}^2S_-^m\\
(S^n\mid S^m)_-&=\ab{\elbows{\phi _+^m,\phi _-^n}}^2S_+^m+\ab{\elbows{\phi _-^m,\phi _-^n}}^2S_-^m
\end{align*}
Letting $a=\ab{\elbows{\phi _+^m,\phi _+^n}}^2$ we conclude that
\begin{align*}
(S^n\mid S^m)_+&=(2a-1)S_+^m+(1-a)I\\
(S^n\mid S^m)_-&=(1-2a)S_+^m+aI
\end{align*}

For another example, the L\"uders channel \cite{lud51} for $S^n$ becomes
\begin{align*}
\lscript ^{S^n}(\rho )&=S_+^n\circ\rho +S_-^n\circ\rho =S_+^n\rho S_+^n+S_-^n\rho S_-^n\\
  &=\elbows{\phi _+^n,\rho\phi _+^n}P_{\phi _+^n}+\elbows{\phi _-^n,\rho\phi _-^n}P_{\phi _-^n}=(\rho\mid A)
\end{align*}
as in \eqref{eq22}.

It is also of interest to consider three spin measurements in directions $m$, $n$ and $r$. We then have that
\begin{align*}
S^m\circ (S^n\circ S^r)
  &=\left\{S_+^m\circ (S_+^n\circ S_+^r),S_+^m\circ (S_+^n\circ S_-^r), S_+^m\circ (S_-^n\circ S_+^r),\right.\\
  &\qquad S_+^m\circ (S_-^n\circ S_-^r),S_-^m\circ (S_+^n\circ S_+^r),S_-^m\circ (S_+^n\circ S_+^r),\\
  &\qquad\left.S_-^m\circ (S_+^n\circ S_-^r),S_-^m\circ (S_-^n\circ S_-^r)\right\}
\end{align*}
The first of these effects becomes
\begin{align*}
S_+^m\circ (S_+^n\circ S_+^r)&=S_+^m\circ\paren{\ab{\elbows{\phi _+^n,\phi _+^r}}^2S_+^n}
   =\ab{\elbows{\phi _+^n,\phi _+^r}}^2S_+^m\circ S_+^n\\
   &=\ab{\elbows{\phi _+^n,\phi _+^r}}^2\ab{\elbows{\phi _+^m,\phi _+^n}}^2S_+^m=c_{+++}S_+^m
\end{align*}
where we have defined $c_{+++}=\ab{\elbows{\phi _+^n,\phi _+^r}}^2\ab{\elbows{\phi _+^m,\phi _+^n}}^2$. The other effects and corresponding coefficients $c_{++-},c_{+-+},\ldots$, are similar. We conclude that
\begin{align*}
(S^n\circ S^r\mid S^m)&=\left\{c_{+++}S_+^m+c_{-++}S_-^m,c_{++-}S_+^m+c_{-+-}S_-^m,\right.\\
  &\qquad\left.c_{+-+}S_+^m+c_{--+}S_-^m,c_{+--}S_+^m+c_{---}S_-^m\right\}
  \intertext{and} 
  \paren{(S^r\mid S^n)\mid S^m}&=\left\{(c_{+++}+c_{+-+})S_+^m+(c_{-++}+c_{--+})S_-^m\right.\\
  &\qquad\left.(c_{++-}+c_{+--})S_+^m+(c_{-+-}+c_{---})S_-^m\right\}
\end{align*}
In general $(S^m\circ S^n)\circ S^r\ne S^m\circ (S^n\circ S^r)$ and we leave this to the reader.

\section{Conditioned Observables and Distributions}  
Let $A=\brac{A_x}$, $B=\brac{B_y}$ be observables on $H$ with value-spaces having cardinality $\ab{\Omega _A}=m$,
$\ab{\Omega _B}=n$. We say that $A$ and $B$ are \textit{complementary} if $(B_y\mid A_x)=\tfrac{1}{n}\,A_x$ and
$(A_x\mid B_y)=\tfrac{1}{m}\,B_y$ for all $x\in\Omega _A$, $y\in\Omega _B$. This condition says that when $A$ has a definite value $x$, then $B$ is completely random and vice versa. This is analogous to the complementary position and momentum observables of continuum quantum mechanics. We then have
\begin{align}                
\label{eq41}
A\circ B&=\brac{A_x\circ B_y\colon x\in\Omega _A,y\in\Omega _B}
    =\brac{\tfrac{1}{n}\,A_x,\ldots ,\tfrac{1}{n}\,A_x\colon x\in\Omega _A}\\
  \intertext{and}          
\label{eq42}       
B\circ A&=\brac{B_y\circ A_x\colon x\in\Omega _A,y\in\Omega _B}
     =\brac{\tfrac{1}{m}\,B_y,\ldots ,\tfrac{1}{m}\,B_y\colon y\in\Omega _B}
\end{align}
where there are $n$ terms $\tfrac{1}{n}\,A_x$ in \eqref{eq41} and $m$ terms in $\tfrac{1}{m}\,B_y$ in \eqref{eq42}. We also obtain
\begin{equation*}
(B\mid A)_y=\sum _{x\in\Omega _A}A_x\circ B_y=\sum _{x\in\Omega _A}\tfrac{1}{n}\,A_x=\tfrac{1}{n}\,I
\end{equation*}
for all $y\in\Omega _B$ and
\begin{equation*}
(A\mid B)_x=\sum _{y\in\Omega _B}B_y\circ A_x=\sum _{y\in\Omega _B}\tfrac{1}{m}\,B_y=\tfrac{1}{m}\,I
\end{equation*}
for all $x\in\Omega _A$. We conclude that $(B\mid A)$ and $(A\mid B)$ are identity observables. It is also interesting to note that
\begin{align*}
(A_x\circ B_y)\circ A_z&=\tfrac{1}{n}\,A_x\circ A_z, A_z\circ (A_x\circ B_y)=\tfrac{1}{n}\,A_z\circ A_x\\
(A_x\circ B_y)\circ B_z&=\tfrac{1}{n}\,A_x\circ B_z=\tfrac{1}{n^2}\,A_x\\
B_z\circ (A_x\circ B_y)&=\tfrac{1}{n}\,B_z\circ A_x=\tfrac{1}{nm}\,B_z
\end{align*}

Let $\brac{\psi _i}$, $\brac{\phi _i}$ be orthonormal bases for $H$ with $\dim H=d$, and let $A=\brac{P_{\psi _i}}$,
$B=\brac{P_{\phi _i}}$ be corresponding atomic observables. Since
\begin{equation*}
P_{\psi _i}\circ P_{\phi _j}=\ab{\elbows{\psi _i,\phi _j}}^2P_{\psi _j}
\end{equation*}
we see that $A$ and $B$ are complementary if and only if $\ab{\elbows{\phi _i,\psi _j}}^2=1/d$ for all $i,j=1,2,\ldots ,n$. Two orthonormal bases that satisfy this condition are called \textit{mutually unbiased} \cite{debz10,hz12,wf89}. There exist mutually unbiased bases in any finite-dimensional Hilbert space and such bases are important in quantum computation and information studies \cite{debz10,hz12,nc00}.

If $\rho\in\sscript (H)$ and $A\in\oscript (H)$, the \textit{distribution} of $A$ in the state $\rho$ is
\begin{equation*}
\Phi _A^\rho (x)=\rmtr (\rho A_x)
\end{equation*}
for all $x\in\Omega _A$. Then $\Phi _A^\rho$ defines a probability measures on $\Omega _A$ given by
\begin{equation*}
\Phi _A^\rho (X)=\sum _{x\in X}\Phi _A^\rho (x)
\end{equation*}
for all $X\in\Omega _A$. Clearly $\Phi _A^\rho$ is affine as a function of $\rho$. Also $\Phi _A^\rho$ is affine as a function of $A$ in the following sense. If $A_i\in\oscript (H)$ with the same value space $\Omega$ and $\lambda _i\ge 0$,
$i=1,2,\ldots ,n$ with $\sum\lambda _i=1$, then it is easy to verify that $\sum\lambda _iA_i\in\oscript (H)$ with value space
$\Omega$ where
\begin{equation*}
\sum\lambda _iA_i=\brac{\sum\lambda _iA_{ix}\colon x\in\Omega}
\end{equation*}
We then have that
\begin{equation*}
\Phi _{\sum\lambda _iA_i}^\rho (x)=\rmtr\paren{\rho\sum\lambda _iA_{ix}}=\sum\lambda _i\rmtr (\rho A_{ix})
   =\sum\lambda _i\Phi _{A_i}^\rho (x)
\end{equation*}
for all $x\in\Omega$. Hence,
\begin{equation*}
\Phi _{\sum\lambda _iA_i}^\rho =\sum\lambda _i\Phi _{A_i}^\rho
\end{equation*}

For the sequential product $A\circ B$ we have that
\begin{equation*}
\Phi _{A\circ B}^\rho (x,y)=\rmtr (\rho A_x\circ B_y)=\rmtr (A_x^{1/2}\rho A_x^{1/2}B_y)=\rmtr\sqbrac{(\rho\mid A_x)B_y}
\end{equation*}
The \textit{left marginal} of $\Phi _{A\circ B}^\rho$ is defined by
\begin{equation*}
L\Phi _{A\circ B}^\rho (x)=\sum _{y\in\Omega _B}\Phi _{A\circ B}^\rho (x,y)=\rmtr (\rho A_x)=\Phi _A^\rho (x)
\end{equation*}
for all $x\in\Omega _A$ so that $L\Phi _{A\circ B}^\rho =\Phi _A^\rho$. More interestingly, the \text{right marginal} of
$\Phi _{A\circ B}^\rho$ becomes
\begin{equation*}
R\Phi _{A\circ B}^\rho (y)=\sum _{x\in\Omega _A}\Phi _{A\circ B}^\rho (x,y)=\rmtr\sqbrac{\rho (B\mid A)_y}
   =\Phi _{(B\mid A)}^\rho (y)
\end{equation*}
for all $y\in\Omega _B$ so that $R\Phi _{A\circ B}^\rho =\Phi _{(B\mid A)}^\rho$. For the conditional observable $(B\mid A)$ we have that
\begin{align*}
\Phi _{(B\mid A)}^\rho (y)&=\rmtr\paren{\rho\sum _{x\in\Omega _A}A_x\circ B_Y}
   =\rmtr\paren{\sum _{x\in\Omega _A}A_x^{1/2}\rho A_x^{1/2}B_y}\\
   &=\rmtr\sqbrac{(\rho\mid A)B_y}=\Phi _B^{(\rho\mid A)}(y)
\end{align*}
for every $y\in\Omega _B$. Hence, $R\Phi _{A\circ B}^\rho=\Phi _{(B\mid A)}^\rho =\Phi _B^{(\rho\mid A)}$.

We now consider some special cases. If $A$ and $B$ are complementary, we have that
\begin{equation*}
\Phi _{A\circ B}^\rho (x,y)=\tfrac{1}{n}\,\rmtr (\rho A_x)=\tfrac{1}{n}\,\Phi _A^\rho (x)
\end{equation*}
for all $(x,y)\in\Omega _A\times\Omega _B$. Moreover,
\begin{equation*}
\Phi _{(B\mid A)}^\rho (y)=\rmtr\paren{\rho\sum _{x\in\Omega _A}\tfrac{1}{n}\,A_x}=\tfrac{1}{n}
\end{equation*}
Thus, $\Phi _{(B\mid A)}^\rho$ is completely random.

As another example, let $A=\brac{P_{\phi _x}\colon x\in\Omega _A}$ be atomic and let $B\in\oscript (H)$ be arbitrary. Then
$\Phi _A^\rho (x)=\elbows{\phi _x,\rho\phi _x}$ and
\begin{align*}
\Phi _{A\circ B}^\rho (x,y)&=\rmtr (P_{\phi _x}\rho P_{\phi _x}B_y)
   =\elbows{\phi _x,\rho\phi _x}\rmtr\paren{\ket{\phi _x}\bra{\phi _x}B_y}\\
   &=\elbows{\phi _x,\rho\phi _x}\elbows{\phi _x,B_y\phi _x}=\Phi _A^\rho (x)\Phi _B^{P_{\phi _x}}(y)
\end{align*}
We also have that
\begin{equation*}
\Phi _{(B\mid A)}^\rho (y)=\rmtr\paren{\sum _{x\in\Omega _A}P_{\phi _x}\rho P_{\phi _x}B_y}
  =\sum _{x\in\Omega _A}\Phi _A^\rho (x)\Phi _B^{P_{\phi _x}}(y)
\end{equation*}

\section{Defining Joint Probabilities}  
\centerline{``A good definition is worth a hundred theorems.'' --Unknown}
\bigskip
\noindent In classical probability theory, events are represented by sets and if $A$ and $B$ are events and $\mu$ is a probability measure, then $\pscript _\mu (A\hbox{ and }B)=\mu (A\cap B)$ is their joint probability. This definition is not adequate for quantum mechanics. One reason for this is that it does not take account of which event is observed first. If such a temporal order is considered, then the first measurement may interfere with the second, resulting in quantum interference. Another problem is caused by the joint additivity of $\pscript _\mu$ which we shall discuss later.

If $A,B\in\oscript (H)$, $\rho\in\sscript (H)$, then the standard definition of the joint probability is
\begin{equation}                
\label{eq51}
\pscript _\rho (A_X\hbox{ then }B_Y)=\rmtr\sqbrac{\iscript _X^A(\rho )B_Y}
\end{equation}
where $\iscript ^A$ is an $A$-compatible instrument \cite{bcl95,bgl95,hz12}. We interpret this as meaning that if the system is initially in the state $\rho$ and $A$ is measured first giving a value in $X$ and next $B$ is measured giving a value in $Y$, then their joint probability is the right side of \eqref{eq51}. We believe that \eqref{eq51} is not a satisfactory definition because this joint probability should depend on $A_X$ and not on an instrument measuring $A$. In particular, if $\iscript ^A$ is a trivial
$A$-compatible instrument $\iscript _X^A(\rho )=\rmtr (\rho A_X)\eta$, then
\begin{align*}
\rmtr\sqbrac{\iscript _X^A(\rho )B_Y}&=\rmtr\sqbrac{\rmtr (\rho A_X)\eta B_Y}=\rmtr (\rho A_X)\rmtr (\eta B_Y)\\
   &=\pscript _\rho (A_X)\pscript _\eta (B_Y)
\end{align*}
Hence, $\pscript _\rho (A_X\hbox{ then }B_Y)$ can be any number less than or equal to $\rmtr (\rho A_X)$ depending on the choice of $\eta$. Also, the conditional output state \cite{hz12,oza85} becomes
\begin{equation*}
\rhotilde _X=\frac{1}{\rmtr\sqbrac{\iscript _X^A(\rho )}}\,\iscript _X^A(\rho )=\eta
\end{equation*}
and this has nothing to do with $A$ which is again unsatisfactory. Moreover, if we include an instrument measuring $A$, why not also include an instrument measuring $B$? We would then define
\begin{equation*}
\pscript _\rho (A_X\hbox{ then }B_Y)=\rmtr\sqbrac{\iscript _Y^B\paren{\iscript _X^A(\rho )}}
\end{equation*}
which gives different results than \eqref{eq51}.

Instead of an arbitrary $A$-compatible instrument, we suggest employing the unique $A$-compatible L\"uders instrument
$\lscript ^A$. This overcomes the previously discussed problems. Moreover, $\lscript ^A$ is the canonical $A$-compatible instrument because any $A$-compatible instrument has the form $\iscript _x=\escript _x\circ\lscript _x^A$ where
$\brac{\escript _x\colon x\in\Omega _A}$ is a set of channels \cite{hz12,lud51}. With this assumption \eqref{eq51} becomes
\begin{align}                
\label{eq52}
\pscript _\rho (A_X\hbox{ then }B_Y)&=\rmtr\sqbrac{\lscript _X^A(\rho )B_Y}
    =\rmtr\paren{\sum _{x\in X}A_x^{1/2}\rho A_x^{1/2}B_Y}\notag\\
    &=\sum _{x\in X}\rmtr (\rho A_x^{1/2}B_YA_x^{1/2})=\sum _{x\in X}\rmtr (\rho A_x\circ B_Y)\\
    &=\sum _{x\in X}\pscript _\rho (A_x\circ B_Y)\notag
\end{align}
Moreover, the conditional output state becomes
\begin{equation*}
\rhotilde _X=\frac{1}{\rmtr\sqbrac{\lscript _X^A(\rho )}}\,\lscript _X^A(\rho )
   =\frac{1}{\rmtr (\rho A_X)}\sum _{x\in X}(A_x\circ\rho )
\end{equation*}

Even this last definition of $\pscript _\rho (A_X\hbox{ then }B_Y)$ is not satisfactory. This is because \eqref{eq51} and \eqref{eq52} are additive in the first variable. That is
\begin{equation}                
\label{eq53}
\pscript _\rho (A_{\cup X_i}\hbox{ then }B_Y)=\sum _i\pscript _\rho (A_{X_i}\hbox{ then }B_Y)
\end{equation}
whenever $X_i\cap X_j=\emptyset$ for $i\ne j$. Now a measurement of $A$ can interfere with a later measurement of $B$ so one should not expect \eqref{eq53} to hold. In fact, \eqref{eq53} is the defining property of classical probability theory in which events do not interfere. Of course, \eqref{eq51} and \eqref{eq52} are also additive in the second variable, but this is not a problem because the measurements of $B$ is after the measurement of $A$ so there is no interference.

We believe that the natural and correct definition of the joint probability is:
\begin{equation}                
\label{eq54}
\pscript _\rho (A_X\hbox{ then }B_Y)=\pscript _\rho (B_Y\mid A_X)=\rmtr (\rho A_X\circ B_Y)
\end{equation}
This is just the definition that we have used in the previous sections of this article. Notice that the difference between \eqref{eq54} and the last expression in \eqref{eq52} is the lack of additivity in \eqref{eq54}.

In order to investigate additivity more closely, we make the following definition. For $a,b,c\in\escript (H)$ with $a\perp b$, we say that $a$ and $b$ are \textit{additive relative to} $c$ if 
\begin{equation}                
\label{eq55}
(a+b)\circ c=a\circ c+b\circ c
\end{equation}
and when \eqref{eq55} holds we write $(a,b\colon c)$. We can rewrite \eqref{eq55} as
\begin{equation}                
\label{eq56}
(a+b)^{1/2}c(a+b)^{1/2}=a^{1/2}ca^{1/2}+b^{1/2}cb^{1/2}
\end{equation}
We show in the next lemma that $(a,b\colon c)$ is a weakening of the compatibility of $a,c$ and $b,c$. This makes sense because compatibility corresponds physically to noninterference which, as mentioned previously, is related to additivity.

\begin{lem}    
\label{lem51}
If $a,c$ an $b,c$ are compatible, the $(a,b\colon c)$.
\end{lem}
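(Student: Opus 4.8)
The plan is to show that both sides of \eqref{eq56} reduce to the single operator $ac+bc$. First I would unpack the hypothesis: compatibility of $a,c$ and of $b,c$ means precisely $ac=ca$ and $bc=cb$. These two relations combine to give $(a+b)c=ac+bc=ca+cb=c(a+b)$, so $a+b$ commutes with $c$ as well. (Recall that $a\perp b$ guarantees $a+b\in\escript (H)$, so $(a+b)^{1/2}$ and the product $(a+b)\circ c$ are defined.)

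The key observation is that in a finite-dimensional Hilbert space the positive square root $x^{1/2}$ of a positive operator $x$ is a function of $x$ via the spectral calculus—equivalently, a limit of polynomials in $x$—so every operator commuting with $x$ automatically commutes with $x^{1/2}$. Applying this with $x=a$, $x=b$, and $x=a+b$, I would conclude that $a^{1/2}$, $b^{1/2}$, and $(a+b)^{1/2}$ each commute with $c$.

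With these commutation relations in hand the computation is a one-line substitution. Since $a^{1/2}$ commutes with $c$, we get $a\circ c=a^{1/2}ca^{1/2}=c\,a^{1/2}a^{1/2}=ca=ac$, and likewise $b\circ c=bc$. Similarly $(a+b)\circ c=(a+b)^{1/2}c(a+b)^{1/2}=(a+b)c=ac+bc$. Therefore $(a+b)\circ c=ac+bc=a\circ c+b\circ c$, which is exactly \eqref{eq55}, so $(a,b\colon c)$ holds. The only step needing any care—standard though it is—is the justification that the square root commutes with $c$; once that is in place, every other manipulation is a routine substitution, so there is no genuine obstacle here.
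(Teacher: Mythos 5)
Your proposal is correct and follows essentially the same route as the paper: both reduce the claim to the fact that $c$ commutes with $a^{1/2}$, $b^{1/2}$, and $(a+b)^{1/2}$, and then compute $(a+b)\circ c=c(a+b)=ca+cb=a\circ c+b\circ c$. The only cosmetic difference is that you justify the commutation of $c$ with the square roots via polynomial approximation, whereas the paper invokes simultaneous diagonalization; both are standard and equivalent here.
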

\begin{proof}
If $ac=ca$ and $bc=cb$, then $a,c$ and $b,c$ can be simultaneously diagonalized. Hence $a^{1/2}c=ca^{1/2}$ and
$b^{1/2}c=cb^{1/2}$. Also, $c$ and $a+b$ can be simultaneously diagonalized so $(a+b)^{1/2}c=c(a+b)^{1/2}$. Therefore,
\begin{equation*}
(a+b)^{1/2}c(a+b)^{1/2}=c(a+b)=ca+cb=a^{1/2}ca^{1/2}+b^{1/2}cb^{1/2}
\end{equation*}
so $(a,b\colon c)$.
\end{proof}

The next example shows that the converse of Lemma~\ref{lem51} does not hold. Thus, there are noncompatible pairs that are still additive.

\begin{exam}       
Let $a,b,c\in\escript (\complex ^3)$ be the following effects
\begin{equation*}
a=\begin{bmatrix}1&0&0\\0&0&0\\0&0&0\\\end{bmatrix},\quad b=\begin{bmatrix}0&0&0\\0&0&0\\0&0&1\\\end{bmatrix},
   \quad c=\tfrac{1}{2}\begin{bmatrix}1&1&0\\1&1&0\\0&0&0\\\end{bmatrix}
\end{equation*}
Then $cb=0$ and $ac\ne ca$ so $a$ and $c$ are not compatible. However,
\begin{equation*}
(a+b)\circ c=a\circ c+b\circ c=\tfrac{1}{2}\,a\rlap{$\qquad\qquad \Box$}
\end{equation*}
\end{exam}

The next lemma characterizes additivity for sharp $a$ and $b$.

\begin{lem}    
\label{lem52}
If $a,b\in\escript (H)$ are sharp with $a\perp b$, then $(a,b\colon c)$ if and only if $acb=0$.
\end{lem}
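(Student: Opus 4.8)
The plan is to exploit the fact that a sharp effect is a projection, so that all the square roots in \eqref{eq56} disappear and the additivity condition collapses to a single bilinear identity. First I would record the consequences of the hypotheses: since $a,b\in\escript(H)$ are sharp they are projections, and $a\perp b$ forces $ab=ba=0$. Indeed, for a unit vector $\psi\in\mathrm{ran}(b)$ one has $\elbows{\psi,(a+b)\psi}\le 1$ while $\elbows{\psi,b\psi}=1$, so $\elbows{\psi,a\psi}=0$ and hence $a\psi=0$; running over $\mathrm{ran}(b)$ gives $ab=0$, and $ba=(ab)^{*}=0$. Consequently $a+b$ is again a projection, so $a^{1/2}=a$, $b^{1/2}=b$ and $(a+b)^{1/2}=a+b$.

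With these simplifications \eqref{eq56} reads $(a+b)c(a+b)=aca+bcb$. Expanding the left-hand side gives $aca+acb+bca+bcb$, so after cancelling $aca$ and $bcb$ the condition $(a,b\colon c)$ is equivalent to the single operator equation
\begin{equation*}
acb+bca=0.
\end{equation*}
Since $a,b,c$ are all self-adjoint, $bca=(acb)^{*}$, so this equation asserts precisely that $acb$ is skew-adjoint.

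The reverse implication is then immediate: if $acb=0$, taking adjoints gives $bca=(acb)^{*}=0$, hence $acb+bca=0$ and $(a,b\colon c)$ holds. For the forward implication the point to notice—and the step I expect to be the crux—is that skew-adjointness of $acb$ by itself does \emph{not} force $acb=0$; one must bring in the projection structure. The trick is to compress the identity $acb+bca=0$ by multiplying on the left by $a$ and on the right by $b$. Using $a^{2}=a$ and $b^{2}=b$, the first term returns $a(acb)b=acb$, while the cross term collapses by orthogonality, $a(bca)b=(ab)c(ab)=0$. Hence $acb=0$, as required. The only genuine insight is this left/right compression that annihilates the $bca$ term via $ab=0$; everything else is routine algebra.
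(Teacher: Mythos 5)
Your proof is correct and follows essentially the same route as the paper: reduce \eqref{eq56} via $ab=0$ and the projection property to the identity $acb+bca=0$, and then show this forces $acb=0$. In fact you supply a detail the paper leaves implicit --- the compression $a(\,\cdot\,)b$ argument showing $acb+bca=0$ implies $acb=0$ --- so your write-up is, if anything, slightly more complete.
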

\begin{proof}
Since $a$ and $b$ are sharp and $a\perp b$, we have that $ab=0$ \cite{hz12} so $a+b$ is sharp. Hence,
\begin{align*}
(a+b)\circ c&=(a+b)c(a+b)=aca+bcb+acb+bca\\
  &=a\circ c+b\circ c+acb+bca
\end{align*}
Therefore, $(a,b\colon c)$ if and only if $acb+bca=0$ which is equivalent to\newline $acb=0$.
\end{proof}

We do not know a generalization of this lemma for unsharp $a,b\in\escript (H)$.

\end{document}